\definecolor{qz}{RGB}{255,0,0}
\newtheorem{theorem}{Theorem}[section]
\newtheorem{corollary}{Corollary}[section]
\newenvironment{proof}[1][Proof]{\noindent\textbf{#1.} }{\ \rule{0.5em}{0.5em}}
\renewcommand{\section}{\@startsection{section}{1}{0em}{\baselineskip}{0.5\baselineskip}{\large\bfseries\large}}
\renewcommand{\subsection}{\@startsection{subsection}{0}{0em}{\baselineskip}{0.5\baselineskip}{\normalfont\bfseries\normalsize}}
\newcommand\independent{\protect\mathpalette{\protect\independenT}{\perp}}
\def\independenT#1#2{\mathrel{\rlap{$#1#2$}\mkern2mu{#1#2}}}
\newcolumntype{.}{D{.}{.}{-1}}
\newcolumntype{d}[1]{D{.}{.}{#1}}
\begin{document}
%%%%%%%%%%%%%%%%%%%%%%%%%%%%%%%%%%%%%%%%%%%%%%%%%%%%%%%%%%%%
%%%%%%%%%%%%%%%%%%%%%% Start of Title Page %%%%%%%%%%%%%%%%%
%%%%%%%%%%%%%%%%%%%%%%%%%%%%%%%%%%%%%%%%%%%%%%%%%%%%%%%%%%%%

%
% TCIMACRO{\TeXButton{Section}{\sectionfont{\bfseries\large\sffamily}}}%
% BeginExpansion
\sectionfont{\bfseries\large\sffamily}%
% EndExpansion
%

% TCIMACRO{\TeXButton{Subsection}{\subsectionfont{\bfseries\sffamily\normalsize
%:
% }}}%
%   BeginExpansion
\subsectionfont{\bfseries\sffamily\normalsize}%
% EndExpansion
%

% TCIMACRO{\TeXButton{noindent}{\noindent}}%
% BeginExpansion
\noindent
% EndExpansion%
% TCIMACRO{\TeXButton{title}{{\sffamily\bfseries\Large
% Cross-screening in observational studies that test many hypotheses}}}%
% BeginExpansion
\begin{center}
\noindent\textbf{Profiling Compliers in Instrumental Variables Designs}%	
\end{center}
% EndExpansion
%

\vspace{1cm}

% TCIMACRO{\TeXButton{noindent}{\noindent}}%
% BeginExpansion
\noindent
% EndExpansion
\textsf{Dominik Hangartner, Moritz Marbach, Leonard Henckel, Marloes H. Maathuis, Rachel R. Kelz, and Luke Keele}%
\vspace{1cm}

\noindent
\textsf{Correspondence to Luke Keele, Department of Surgery, University of Pennsylvania, 3400 Spruce St, Philadelphia, Pennsylvania, 19104 (email: luke.keele@gmail.com, phone: 614-448-8168)}%

\vspace{1cm}

% TCIMACRO{\TeXButton{noindent}{\noindent}}%
% BeginExpansion
\noindent\textsf{Author Affiliations: London School of Economics, London WC2A 2AE, United Kingdom (Dominik Hangartner); Center for Comparative and International Studies, ETH Zurich, 8092 Zurich, Switzerland (Dominik Hangartner and Moritz Marbach); Department of Mathematics, ETH Zurich, 8092 Zurich, Switzerland (Leonard Henckel and Marloes H. Maathuis); Department of Surgery, University of Pennsylvania, Philadelphia, Pennsylvania (Rachel Kelz and Luke Keele)}%

\vspace{1cm} 

\noindent\textsf{Sources of financial support: Dr. Keele and Dr. Kelz are funded by a grant from the National Institute on Aging, 1R01AG060612-01A1.}

\vspace{1cm}

\noindent\textsf{Conflict of interest: None.}

\vspace{1cm}

\noindent\textsf{Declarations: The dataset used for this study was purchased with a grant from the Society of American Gastrointestinal and Endoscopic Surgeons. Although the AMA Physician Masterfile data is the source of the raw physician data, the tables and tabulations were prepared by the authors and do not reflect the work of the AMA. The Pennsylvania Health Cost Containment Council (PHC4) is an independent state agency responsible for addressing the problems of escalating health costs, ensuring the quality of health care, and increasing access to health care for all citizens. While PHC4 has provided data for this study, PHC4 specifically disclaims responsibility for any analyses, interpretations or conclusions. Some of the data used to produce this publication was purchased from or provided by the New York State Department of Health (NYSDOH) Statewide Planning and Research Cooperative System (SPARCS). However, the conclusions derived, and views expressed herein are those of the author(s) and do not reflect the conclusions or views of NYSDOH. NYSDOH, its employees, officers, and agents make no representation, warranty or guarantee as to the accuracy, completeness, currency, or suitability of the information provided here.}

\vspace{1cm}

\noindent\textsf{Suggested running title: Profiling Compliers}

\clearpage
\begin{abstract}
\noindent
Instrumental variable (IV) analyses are becoming common in health services research and epidemiology. IV analyses can be used both to analyze randomized trials with noncompliance and as a form of natural experiment. In these analyses, investigators often adopt a monotonicity assumption, which implies that the relevant effect only applies to a subset of the study population known as compliers. Since the estimated effect is not the average treatment effect of the study population, it is important to compare the characteristics of compliers and non-compliers. Profiling compliers and non-compliers is necessary to understand what subpopulation the researcher is making inferences about, and an important first step in evaluating the external validity (or lack thereof) of the IV estimate for compliers. Here, we discuss the assumptions necessary for profiling, which are weaker than the assumptions necessary for identifying the local average treatment effect if the instrument is randomly assigned. We then outline a simple and general method to characterize compliers and noncompliers using baseline covariates. Next, we extend current methods by deriving standard errors for these estimates. We demonstrate these methods using an IV known as tendency to operate (TTO) from health services research. 
\end{abstract}

\noindent\textit{Keywords}: Instrumental variables; Complier Profiling; Local Average Treatment Effect

\vspace{1cm}

\noindent\textit{Abbreviations}: IV, instrumental variable; TTO, Tendency to Operate;  LATE, Local Average Treatment Effect 

\clearpage

\doublespacing

\section{Introduction} 
\label{sec:intro}

The instrumental variable (IV) method has become a key study design in health services research and epidemiology. An IV is a device that encourages exposure to a treatment of interest. IVs occur in both randomized experiments with noncompliance and observational studies. IV methods are viewed as valuable, since valid causal inferences can be drawn even in the presence of unobserved confounding \cite{angrist1996identification,baiocchi2014instrumental}. The IV method, however, requires strong assumptions. To improve the IV study design, an expanding literature in epidemiology has focused on developing guidelines, diagnostics, and falsification tests for evaluating these assumptions.\cite{baiocchi2012near,swanson2013commentary,Yang:2013,glymour2012credible,kang2013causal,pizer2016falsification,keeleexrest2018,bransonkeele2020}. Here, we further contribute to this literature by developing new methods for profiling compliers.

In the IV study design, $D$ denotes treatment exposure and $Y$ is the outcome of interest. The instrument, $Z$, denotes that some units were provided encouragement for treatment exposure. Using the IV method, investigators use $Z$ to instrument $D$, which allows them to identify the effect of $D$ on $Y$ even if unobserved variables confound the relationship between $D$ and $Y$. Many IV studies invoke a framework where study subjects are classified into four different strata: compliers, always-takers, never-takers, and defiers.\cite{angrist1996identification} Compliers are patients who are exposed to the treatment $(D=1)$ if and only if they were encouraged by the instrument $(Z=1)$.  Always-takers are exposed to $D$ regardless of their value of $Z$, and never-takers are not exposed to $D$ regardless of their value of $Z$. Defiers are only exposed to $D$ if not encouraged, and not exposed if encouraged. The presence of defiers is ruled out by invoking an assumption known as monotonicity.\cite{angrist1996identification} Under monotonicity, the IV estimate---often referred to as the local average treatment effect (LATE)---is identified and describes the causal effect of $D$ on $Y$ for compliers only.

When an IV analysis is focused on the complier subpopulation, it invariably raises the question of whether the LATE differs from the average treatment effect (ATE)---the effect of $D$ on $Y$ for the entire study population. This question becomes more acute when the proportion of compliers is relatively small. Several authors and guidelines on IV analyses recommend complier profiling \cite{angrist2008mostly,swanson2013commentary,baiocchi2014instrumental}. Profiling refers to calculating descriptive statistics for the complier population. The complier profile can then be compared to the overall patient population profile, as well as the always-taker and never-taker subpopulations. If profiling reveals that the complier subpopulation is different with respect to observable covariates that are likely to be predictive of treatment effect magnitude, this suggests that the LATE may not be close to the ATE. Moreover, comparing never-takers to compliers allows the investigator to reason about the sub-population that cannot be induced to take the treatment. One example, where profiling compliers is relevant, are studies that leverage surgeon tendency to operate (TTO) as an instrument for estimating the LATE of operative management (i.e. the treatment) relative to non-surgical care. In this context, compliers are patients for whom surgeons with a high TTO choose operative management and surgeons with a low TTO choose non-surgical care, while never-takers are patients for whom doctors never choose surgery. Below we analyze such data and find that compliers are substantially more likely to suffer from sepsis at admission than never-takers. This difference in a key covariate that has been shown to moderate the effect of surgery outcomes should caution investigators to assume that the LATE generalizes to the ATE. We further discuss this application below. 

This paper starts by reviewing a simple method for profiling subpopulations in IV studies developed by Marbach and Hangartner \cite{marbach2020profiling}. This method of profiling provides descriptive statistics for not only compliers but also always-takers and never-takers. In this paper, we extend this profiling method in several key ways. We show that their estimator is consistent and normally distributed and we derive analytic standard errors. We also outline the assumptions necessary for applying the method to contexts where the IV is not randomly assigned but ignorable conditional on covariates. Our profiling method allows for simple visual comparisons of the characteristics of compliers, non-compliers and the overall study population and incorporates measures of statistical uncertainty. In Section~\ref{sec:notation}, we outline notation and necessary assumptions for estimating the LATE and for IV profiling. In Section~\ref{s:profilest}, we review identification conditions and estimation methods for profiling compliers. Next, we derive standard errors and use simulation to demonstrate coverage of confidence intervals. In Section~\ref{s:application}, we return to our IV application that uses surgeon TTO as an instrument for emergency surgery and apply our method. In Section~\ref{s:conclusion}, we conclude and point researchers to the software that implements these methods.

\section{Notation and IV Assumptions} 
\label{sec:notation}

First, we define notation. Let $D$ be the binary treatment exposure, $Y$ be the outcome, $Z$ be a binary instrument, and $\bm{X} = (X_1,\ldots,X_K)$ be a set of baseline covariates measured before $Z$ and $D$ are assigned. The covariates in $\bm{X}$ describe the units and  will be used to profile the compliers in the population (e.g., age, gender, pre-existing co-morbidities). To simplify notation, below, we will will focus on a single profiling covariate, $X$. 

For $z = 0,1$ and $d = 0,1$, let $Y{(z,d)}$ be the potential outcome of $Y$ when $Z=z$ and $D=d$, and let $D{(z)}$ be the potential outcome of $D$ when $Z = z$. The observed values $(Y, D, Z)$ are related to the potential outcomes through the following equations:
\[
D = D{(Z)} = Z D{(1)}  + (1 - Z) D{(0)},
\]
\[
Y  = Y{(Z,D)} =  Y{\left(Z, D{(Z)}\right)} = Z Y{\left(1,D{(1)} \right)} + (1 - Z) Y{\left(0,D{(0)} \right)}.
\]

This notation implicitly assumes the stable unit treatment value assumption (SUTVA) \cite{rubin1980randomization}. SUTVA implies that 1) the levels of $D$ (1 and 0) adequately represent all versions of the treatment, and 2) each subject's outcome is not affected by other subjects' exposures.  The first component of SUTVA is often referred to as the consistency assumption in epidemiology.\cite{Schwartz:2012} In addition to SUTVA, a standard IV analysis requires three additional core assumptions. The subset of assumptions that are necessary to identify the LATE, but are not needed for profiling, are labelled with ``$*$''.
\begin{description}
\item[(A1) Relevance:] $\mathbb{E}[D|Z=1] \ne \mathbb{E}[D|Z=0]$.
\end{description}
A1 implies that the instrument has a nonzero effect on treatment exposure. For both, estimation of treatment effects and profiling, the instrument must have a discernible effect on patients' treatment exposure.
\begin{description}
\item[(A2*) Exclusion restriction:] $Y(z,d) = Y(z',d)$ for any $z$,
  $z'$, and $d$. 
\end{description}   

A2* implies that the potential outcomes only depend on the instrument $Z$ through its effect on the exposure $D$. This assumption allows us to rewrite the potential outcomes as $Y(0,d) = Y(1,d) = Y(d)$. Note, however, that this assumption is not needed for  profiling, since profiling does not leverage information about the outcome $Y$.
\begin{description} 
\item[(A3*) Random or effective (as-if) random assignment I:] $Y(z,d), D(z) \independent Z $ for all $z = 0,1$ and $d = 0,1$. 
\end{description} 
This implies that there are no unmeasured confounders for $Y(z,d)$ and $Z$, and $D(z)$ and $Z$, respectively. Since profiling does not require any assumptions about the relationship between the instrument and the outcome, we will refine A3* below. A3* will hold by design in randomized trials with noncompliance. If $Z$ is not randomly assigned, additional covariate adjustment might be necessary, which we discuss below. In both cases, A3* can be probed using falsification tests.\cite{bransonkeele2020} 

In addition to assumptions A1, A2* and A3*, many IV analyses whose goal is to estimate the LATE invoke the following monotonicity assumption:

\begin{description}
\item[(A4) Monotonicity:] $D(1) \geq D(0) $. 
\end{description}
The monotonicity assumption implies that there are no defiers, i.e. units who would always do the opposite of what the instrument encourages them to do. If defiers are not present, investigators can interpret the effect of $D$ on $Y$ for the (local) average treatment effect among compliers. Notably, A4 holds if control subjects are unable to access the treatment in the absence of encouragement. That is, A4 holds by design if $\mathbb{P}(D = 0 | Z=0) = 1$.

In the IV design we consider, study units can be classified into four strata based on combinations of treatment exposure $D$ and instrument $Z$. Table~\ref{tab:strata} contains the study population stratified by $D$ and $Z$. These strata are: always-takers $(D(0)=D(1)=1)$, never-takers $(D(0)=D(1)=0)$, compliers $(D(0)<D(1))$ and defiers $(D(0)>D(1))$.  Under A4, we rule out the presence of defiers, which allows us to directly identify the compliance status of the noncomplying patients located on the off-diagonal. We call the never-takers (always-takers) on the off-diagonal ``observable'' never-takers (always-takers) and the never-takers (always-takers) on the main diagonal ``non-observable'' never-takers (always-takers).

\begin{table}[ht!]
\begin{center}
\caption{Stratification of the study population by treatment ($D$) and instrument ($Z$) values.\label{tab:strata}}
\begin{tabular}{llcc}
\toprule
&	& \multicolumn{2}{c}{$Z$} \\ 
\cmidrule(lr){3-4}
 & & 0 & 1 \\ 
\midrule 
\multirow{2}{*}{$D$} & 0 & Compliers \& Never-takers & Never-takers \& \sout{Defiers}   \\ 
	& 1 & Always-takers  \& \sout{Defiers}   & Compliers \& Always-takers \\ 
\bottomrule
\end{tabular}
\captionsetup{justification=raggedright,
singlelinecheck=false
}
\end{center}
\end{table}

%The possibility of effect heterogeneity across always-takers, never-takers, and compliers motivates profiling. If these three subgroups differ substantially with respect to $X$, we may infer that the effect of $D$ on $Y$ varies by strata. 

Researchers interested in identifying the LATE typically invoke assumptions A1, A2*, A3* and A4.\cite{angrist1996identification,baiocchi2014instrumental}. In contrast, profiling does not require the exclusion restriction but instead that the instrument is independent (as-if randomly assigned) from the profiling covariate $X$. As such, profiling requires us to change A3* as follows:

\begin{description}
\item[(A5) Random or effective (as-if) random assignment II:] $X, D(z) \independent Z$ for all $z = 0,1$. 
\end{description}
This implies that there are no unmeasured confounders for $D(z)$ and $Z$, and $X$ and $Z$, respectively.  In contrast to assumption A3*, we do not need to make assumptions about the relationship of $Z$ and $Y$. In that sense, A5 is weaker than A3*. However, profiling requires independence between the profiling covariate $Z$ and $X$. In that sense, A5 is stronger than A3*. In sum, for profiling alone, researchers only need to assume A1, A4, and A5.

In research designs where the instrument is not randomly assigned, researchers might need to adjust for (additional) covariates to make assumption A5 hold. Let $\bm{W} = (W_1,\ldots,W_L)$ be a set of covariates measured before $Z$ and $D$ are assigned. $\bm{W}$ contain those confounders (e.g., age, gender, race/ethnicity) that are necessary to adjust for such that $D(z), X \independent Z | \bm{W} $ holds for all $z = 0,1$. Note that the same covariate used for profiling can also be used for adjustment, i.e. $ X \cap \bm{W}$ may be nonempty.  The important implication of A5 is that complier profiling should be done conditional on covariates in IV designs where the IV is not randomly assigned. To conduct complier profiling conditional on covariates, researchers can use, for example, weighting-based semiparametric instrumental variable estimation.\cite{abadie2003larf}. An alternative approach is to use matching to adjust for IV-outcome confounders; see Baiocchi et al.\cite{baiocchi2012near} and Keele et al.\cite{Keele:2016c} for examples. With matched data, complier profiling can be done using estimators unconditional of $\bm{W}$. We apply the matching approach to the observational study application presented below. Therefore, and to simplify notation, we discuss identification and estimation of the complier covariate mean without explicitly conditioning on $\bm{W}$ next.

\section{Profiling Complier Strata in IV Studies} 
\label{s:profilest}

In the next section, we review methods for profiling compliers. Building on Abadie's $\kappa$-weights used for semiparametric instrumental variable estimation \cite{abadie2003larf}, the concept of complier profiling was first outlined by Angrist and Pischke \cite{angrist2008mostly}. Baiocchi et al. extended complier profiling and introduced the technique to health service research and epidemiology \cite{baiocchi2014instrumental}.  Marbach and Hangartner extended profiling to the always-taker and never-taker subpopulations, developed graphical displays for the results, and provided software \cite{marbach2020profiling}. In section~\ref{var}, we present new results for variance estimation.

\subsection{Complier Profile Estimation}
\label{ident} 
  
Under assumptions A1 and A5, observable and non-observable always-takers have the same distribution for $X$ such that $\mathbb{E}[ X | D(0)=D(1)=1, Z ] = \mathbb{E}[ X | D(0)=D(1)=1]$. The same equality holds for never-takers: $\mathbb{E}[ X | D(0)=D(1)=0, Z ] = \mathbb{E}[ X |  D(0)=D(1)=0]$. Under assumptions A1, A4, and A5, the profiling covariate mean for always-takers and never-takers, respectively, is
\begin{equation} \label{eq1}
\mathbb{E}[ X |  D(0)=D(1)=1 ] = \mathbb{E}[ X | D=1, Z=0]
\end{equation}
and
\begin{equation} \label{eq2}
\mathbb{E}[ X | D(0)=D(1)=0 ] = \mathbb{E}[ X | D=0, Z=1].
\end{equation}
This implies that investigators can readily estimate descriptive statistics for these two strata in Table~\ref{tab:strata} substituting population moments with sample analogs.

In contrast, compliers, who are located on the main-diagonal of Table~\ref{tab:strata} cannot be identified at the individual level, since compliers assigned to $Z=0$ have the same treatment and instrument values as never-takers assigned to $Z=0$, and compliers assigned to $Z=1$ are mixed with the always-takers assigned to $Z=1$. However, by subtracting $\mathbb{E}[ X | D(0)=D(1)=1 ] $ and $\mathbb{E}[ X | D(0)=D(1)=0 ]$, each weighted by the size of the corresponding strata, from the population mean $\mathbb{E}[ X ]$, investigators can identify the covariate mean for compliers: $\mathbb{E}[ X | D(0)<D(1) ]$. Under assumptions A1, A4 and A5 and substituting the results from equations~(\ref{eq1}) and (\ref{eq2}), the covariate mean of compliers is:

\begin{equation} \label{eq3}
\begin{split}
\mathbb{E}[ X | D(0)<D(1) ]= \big (\mathbb{E}[ X ]  - & \mathbb{E}[ X |D=0, Z=1] \mathbb{P}[D=0| Z=1 ]  \\
						- & \mathbb{E}[ X | D=1, Z=0] \mathbb{P}[D=1 | Z=0 ]\big ) \\ 
						& (\mathbb{E}[D|Z=1]-\mathbb{E}[D | Z=0])^{-1}. 
\end{split}
\end{equation}
In order to use equation \eqref{eq3} for estimation, we plug in  $\mathbb{E}[D|Z=1]=\mathbb{P}[D=1|Z=1]=1-\mathbb{P}[D=0|Z=1]$ and then expand all conditionals to obtain
\begin{equation} \label{eqLeo}
\begin{split} 
\mathbb{E}[ X | D(0)<D(1) ]= \big ( \mathbb{E}[ X ]  - & \frac{\mathbb{E}[ X \mathbbm{1}_{\{D=0\}} \mathbbm{1}_{\{Z=1\}}]}{\mathbb{P}[Z=1]} - \frac{\mathbb{E}[ X \mathbbm{1}_{\{D=1\}} \mathbbm{1}_{\{Z=0\}}]}{1-\mathbb{P}[Z=1 ]}\big ) \\
& \left(1 - \frac{\mathbb{P}[D=0,Z=1]}{\mathbb{P}[Z=1]}-\frac{\mathbb{P}[D=1,Z=0]}{1-\mathbb{P}[Z=1]}\right)^{-1}. 
\end{split}
\end{equation}

Based on $N$ i.i.d. realizations from $(Z,X,D)$, each of the six terms in equation \eqref{eqLeo} can be estimated as a simple sample mean. The estimators for the three expectations terms are\\
$\hat{\mu}=\frac{1}{N}\sum_{i=1}^N X_i$, $\hat{\mu}_{vnt}=\frac{1}{N}\sum_{i=1}^N Z_i (1-D_i) X_i$, $\hat{\mu}_{vat}=\frac{1}{N}\sum_{i=1}^N (1-Z_i) D_i X_i$, \\
\noindent and the ones for the three event probabilities are \\ 
 $\hat{\pi}_{vnt}=\frac{1}{N}\sum_{i=1}^N Z_i (1-D_i)$, $\hat{\pi}_{vat}=\frac{1}{N}\sum_{i=1}^N (1-Z_i) D_i$, $\hat{\pi}_{z}=\frac{1}{N}\sum_{i=1}^N Z_i$. The resulting plug-in estimator for the complier covariate mean is
\begin{align} \label{eq4}
\hat{\mu}_{co}
&= \left(\hat{\mu} - \frac{\hat{\mu}_{vnt}}{\hat{\pi}_z}- \frac{\hat{\mu}_{vat}}{1-\hat{\pi}_{z}}\right) /\left(1 - \frac{\hat{\pi}_{vnt}}{\hat{\pi}_z} - \frac{\hat{\pi}_{vat}}{1-\hat{\pi}_z}\right) \,\,\, .
\end{align}

Analysts can use this estimator to estimate the covariate means of the relevant complier strata. Comparing the covariate means of compliers and non-compliers can reveal the extent that these group compositions are heterogeneous. However, to draw meaningful comparisons, analysts also require measures of statistical uncertainty. We take up this issue next. 

\subsection{Variance Estimation}
\label{var} 

Marbach and Hangartner suggest the bootstrap for obtaining measures of statistical uncertainty for the complier covariate mean in equation~(\ref{eq4}). Here, we provide two contribution in terms of variance estimation. First, we show that the estimator for the complier mean is consistent and asymptotically normal using the Delta method. Second, as a corollary, we derive an estimator for the complier mean standard error. One advantage of analytical standard errors is that they are computationally less intensive than the bootstrap. 
 
While the full proof for consistency is provided in the Appendix, here we give a brief sketch. We group the estimators for covariate means and event probabilities defined in Section~\ref{ident} as elements in a vector: $\hat{\boldsymbol{\beta}}= (\hat{\mu},\hat{\mu}_{vnt}, \hat{\mu}_{vat}, \hat{\pi}_{vnt},\hat{\pi}_{vat},\hat{\pi}_{z})$. We apply the multivariate Central Limit Theorem to $\hat{\boldsymbol{\beta}}$ and then use this to apply the Delta Method to equation \eqref{eq4}.  As a result we obtain 

\begin{equation}
\sqrt{N}(f(\boldsymbol{\hat{\beta}}) - \mu_{co}) \xrightarrow{d} \mathcal{N}(0,\nabla f(\boldsymbol{\beta})^T \Sigma \nabla f(\boldsymbol{\beta})) \,\,\, ,
\label{a.dis.equ}
\end{equation}

where $f$ is the formula from equation \eqref{eq4}, $\nabla f(\boldsymbol{\beta})$ is a gradient vector and $\Sigma$ the covariance matrix from the multivariate Central Limit Theorem for $\hat{\boldsymbol{\beta}}$ (see the proof for Theorem 1 in the Appendix for details). Let $\hat{\Sigma}$ denote the sample covariance matrix corresponding to $\Sigma$. Under the mild additional assumption that $X$ have finite fourth moment, we find that the plug-in estimator $\nabla f(\boldsymbol{\hat{\beta}})^T \hat{\Sigma} \nabla f(\boldsymbol{\hat{\beta}})$ for the asymptotic variance $\nabla f(\boldsymbol{\beta})^T \Sigma \nabla f(\boldsymbol{\beta})$ from equation~(\ref{a.dis.equ}) is consistent. We can therefore compute asymptotically valid standard errors for our covariate mean estimator $\hat{\mu}_{co}$ as

\begin{equation}
SE(\hat{\mu}_{co})=\sqrt{ \frac{1}{N} \nabla f(\boldsymbol{\hat{\beta}})^T \hat{\Sigma} \nabla f(\boldsymbol{\hat{\beta}})}.
\label{coro}
\end{equation}

With the analytical variance estimator provided by this corollary, investigators need not use the bootstrap to obtain measures of statistical uncertainty for strata means.

\subsection{Simulation Study} 
\label{ss:profilese}
         
Next, we conduct a series of Monte Carlo simulations to verify the consistency of the plug-in variance estimator. We also examine whether the corresponding 95\% confidence intervals achieve the nominal coverage rate. For each simulation, we generate a dataset with $N$ units that are randomly assigned to one of the three strata: complier, always-taker or never-taker. For each unit, we simulate a realized value of the instrument, the corresponding observed treatment status, and a value for the continuous covariate. In the first variant of the simulations, we fix all parameters of the data generating process. The strata are of equal size and three quarters of the sample receive the encouragement. The covariate is drawn from three normal distributions with strata-specific parameters $\mu_{co}=2, \sigma_{co}=0.5$, $\mu_{nt}=1, \sigma_{nt}=1$, and $\mu_{at}=0.5, \sigma_{at}=2$, respectively. 

In the second variant of the simulations, we vary the probability that a unit belongs to a strata by drawing from a uniform Dirichlet density under the constraint that no strata share is lower than 0.1. The binary instrument is drawn from a Bernoulli distribution with a success probability that is drawn from a uniform density in a range from 0.1 to 0.9. To simulate the covariate distribution, we again draw from three strata-specific normal densities. The mean parameters for these normal densities are drawn from a uniform distribution with support -2 to 2, and the standard deviations from a uniform density with support 0.25 to 2. This setup is identical to the Monte Carlo experiments in Marbach and Hangartner\cite{marbach2020profiling}. In both simulation variants, we used 13 different sample sizes ranging from $N$=500 to $N$=24,000, and there were 1,000 simulations for each sample size. 

Table \ref{tab:mc} provides the results from the first set of simulations. Column 1 reports the sample size, columns 2-3 the root mean squared error (RMSE) for the bootstrap and the plug-in estimate, and columns 4-5 show the corresponding 95\% coverage. As expected, both variance estimates converge with increasing sample size and the coverage rate is close to the nominal 95\% rate. Across all sample sizes, the RMSE for the plug-in estimate tends is smaller than the RMSE for the bootstrap estimate, suggesting a higher efficiency for the plug-in variance estimator as compared to the bootstrap estimator. 

\begin{table}[ht]
\centering
\caption{Monte Carlo simulations results across different sample sizes with the root mean squared error scaled by sample size for bootstrap and plug-in estimate (columns 2-3), and the coverage rate of the 95\% confidence interval based on the bootstrap and plug-in estimate (columns 4-5).\label{tab:mc}}
\begin{tabular}{crrrrr}
\toprule
	&  \multicolumn{2}{c}{Variance} & \multicolumn{2}{c}{Coverage rate} \\
\cmidrule(lr){2-3}
\cmidrule(lr){4-5}
N & Bootstrap & Plug-in & Bootstrap & Plug-in \\ 
\midrule
500 & 100.7089 & 41.3911 & 0.951 & 0.947 \\ 
750 & 134.7876 & 35.0345 & 0.953 & 0.942 \\ 
1000 & 26.5930 & 22.0742 & 0.970 & 0.966 \\ 
1500 & 20.4587 & 18.3057 & 0.948 & 0.946 \\ 
2000 & 16.1971 & 14.8250 & 0.952 & 0.945 \\ 
3000 & 13.3639 & 12.5435 & 0.944 & 0.945 \\ 
4000 & 10.9481 & 10.2017 & 0.961 & 0.957 \\ 
6000 & 8.6696 & 7.9722 & 0.954 & 0.953 \\ 
8000 & 7.7589 & 7.1140 & 0.950 & 0.949 \\ 
12000 & 6.4108 & 5.7230 & 0.951 & 0.945 \\ 
16000 & 5.5209 & 4.9234 & 0.940 & 0.940 \\ 
20000 & 5.0438 & 4.2608 & 0.955 & 0.953 \\ 
24000 & 4.7506 & 4.0273 & 0.958 & 0.954 \\ 
 \bottomrule
\end{tabular}
\end{table}

Figure~\ref{fig:mc} reports the results from the second set of simulations focusing on the plug-in variance estimate. Exploring a larger segment of the parameter space, we find that the 95\% confidence intervals based on the plug-in  estimate display estimated coverage rates between 94\% and 96\% across all sample sizes.

\begin{figure}[H]
\centering
\includegraphics[width=0.6\textwidth]{./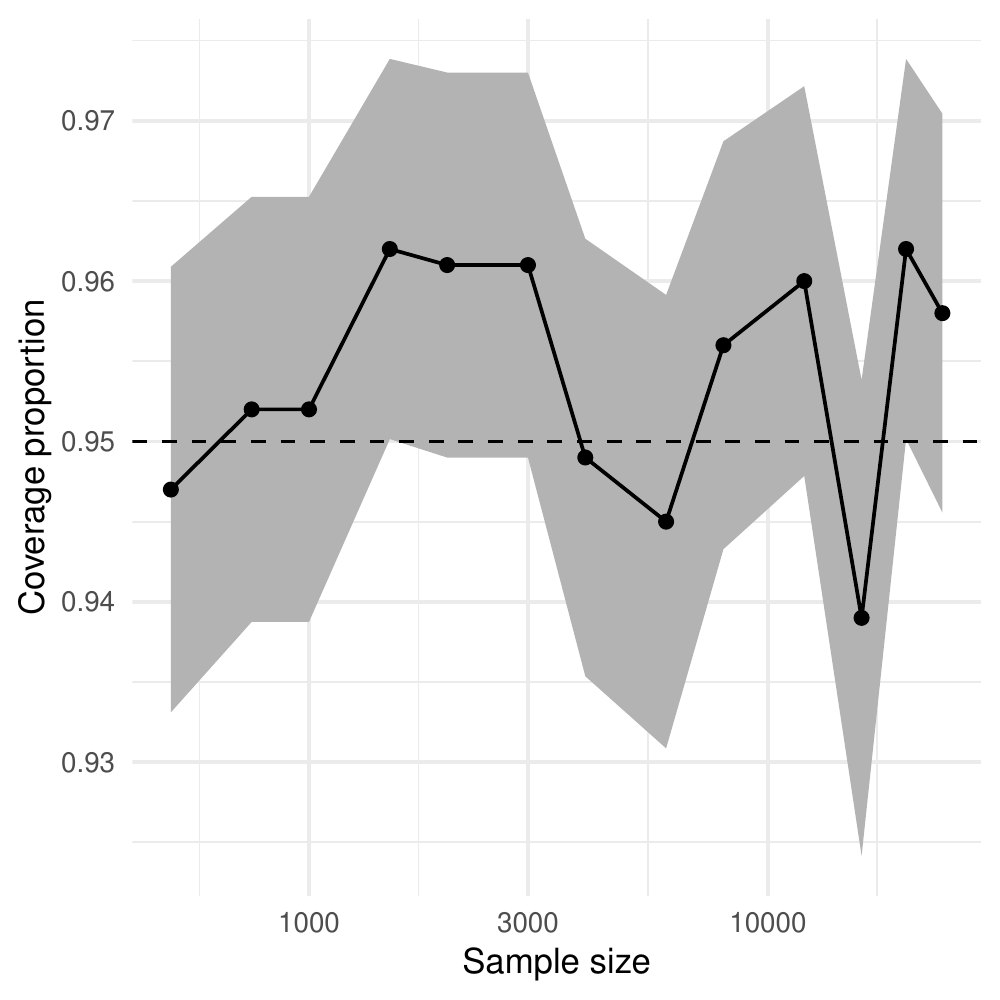}
\captionsetup{justification=raggedright,
singlelinecheck=false
}
\caption{The black lines shows the coverage rate for the 95\% confidence interval of the complier mean across 13 Monte Carlo experiments with varying sample sizes based on the plug-in variance estimator. The gray ribbon visualizes the 95\% confidence intervals of the coverage rate based on the Monte Carlo simulation error. \label{fig:mc}} 
\end{figure}

The Monte Carlo simulation confidence intervals (in gray) for the coverage proportion overlap with the nominal coverage rate. This suggests that the plug-in variance estimator has accurate coverage for all sample sizes used in the simulations. 

\section{Application: Emergency general surgery and physician preference for operative management} 
\label{s:application}

In this section, we illustrate the the use of complier profiling using a recent study by Keele et al.\ \cite{keeleegsiv2018}. In that study, the authors leverage surgeons tendency to operate (TTO) as an IV for whether patients received emergency general surgery (EGS) for a set of pre-defined emergency medical conditions. They measured TTO as the percentage of similar emergency cases where the surgeon chose to operate in the past. TTO is one example of a physician's treatment preferences, which is an often used instrument in clinical studies \cite{brookhart2007preference,keeleegsiv2018}. 

The data is based on all-payer hospital discharge data from New York, Florida and Pennsylvania in 2012-2013, linked to the American Medical Association (AMA) Physician Masterfile to join data on patient claims with surgeons. The data include patient sociodemographic and clinical characteristics including a measure of patient frailty, an indicator for severe sepsis, and 31 indicators for comorbidities based on Elixhauser indices \cite{elixhauser1998comorbidity}. The study population was restricted to patients admitted for inpatient emergency care, urgently, or through the emergency department with a diagnosis of an acute colitis and diverticulitis. Surgeons were excluded if they could not be identified in the AMA Masterfile, did not meet criteria for general surgery training, did not attend an allopathic program, or did not train within the United States. To illustrate our proposed method for profiling, we replicate the results used in Fogarty et al.\cite{fogarty2020ivsens} that focused on patients with either colitis and diverticulitis. In their analysis, they adjusted for possible IV-outcome confounders using a variant of matching that combines near-far matching with refined covariate balance \cite{Pimentel:2015a,keeleicumatch2019}. More specifically, patients were exactly matched on hospital such that across-hospital differences could not bias the analysis. They were also exactly matched on an indicator for sepsis. For the remaining covariates, they minimized the total of the within-pair distances on covariates as measured by the Mahalanobis distance and applied a caliper to the propensity score. After matching, there are 3,034 pairs of patients. In each pair, patients are matched to have care from a surgeon with a low and high TTO, where TTO is divided at the sample median. Conditional on these covariate adjustments, we expect that the instrument is as-if randomly assigned with respect to the treatment, outcome, and profiling variables. Primary outcomes in the study were the presence of an in-hospital complication and length of stay in days. The estimated LATE for complications is 0.17 (95\% CI: 0.14, 0.21) and for length of stay is 5.2 (95\% CI: 4.1, 6.2). Next, we use complier profiling to gauge to what extent these results generalize to the larger study population.

First, we review the definitions of the key strata in this study. Here, compliers are patients for whom surgeons with a high TTO choose operative management and surgeons with a low TTO choose non-surgical care, and never-takers are patients for whom surgeons never choose operative management, and always-takers are patients for whom surgeons always use operative management. Based on baseline covariates, we ask whether compliers appear to be quite different from always-takers and never-takers. Applying the estimator outlined in Sections \ref{ident} and \ref{var} , we can profile the patients in the study sample in terms of their covariates. In particular, we focus on patients' socio-demographic and clinical characteristics. For analysis, we use the dedicated \sf{R} package ``ivdesc''. Figure~\ref{fig:fig1} visualizes the means and associated confidence intervals for compliers, always-takers, never-takers, and the entire sample for a subset of available covariates. Table 1 in the appendix contains the underlying numerical estimates.

\begin{figure}[htbp]
  \centering
  \includegraphics[scale=.99]{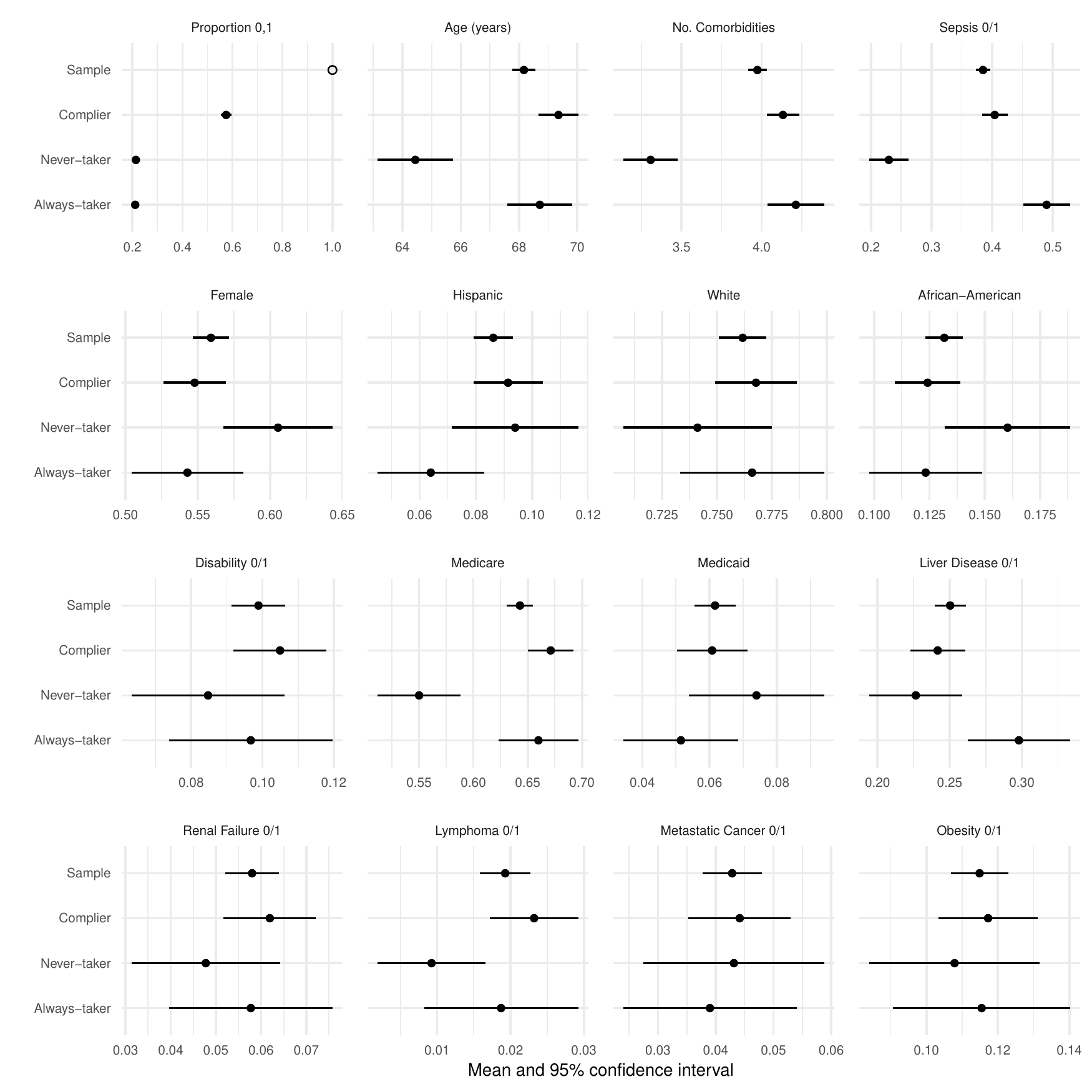}
  \caption{Instrumental Variable Subpopulation Descriptive Statistics for Select Covariates In The EGS Application.}
\label{fig:fig1}
\end{figure}

The first panel in Figure~\ref{fig:fig1} displays the sample proportions of never-takers (21\%), always-takers (21\%), and compliers (57\%). The remaining panels show the covariate means and corresponding 95\% confidence intervals based on the analytical variance estimators for the three strata as well as the overall sample mean. Several relevant differences emerge. Compared to never-takers, we find that compliers are older (69.35 versus 64.44 years), more likely to have Medicare insurance (67\% versus 55\%), more likely to suffer from sepsis (40\% versus 23\%) and have a greater number of comorbidities (4.13 versus 3.31). Compared to always-takers, compliers are less likely to suffer from sepsis (40\% versus 49\%) and liver disease (24\% versus 30\%), and more likely to be hispanic (9\% versus 6\%). We find only minor differences across the three strata on covariates such as disability, renal failure, metastatic cancer, and obesity. Fogarty et al.\cite{fogarty2020ivsens} studied the extent to which the effect of surgery was heterogenous. They found that sepsis was a key effect modifier, but the rest of the covariates were not significant effect modifiers. As such, differences on this covariate imply that investigators should be cautious when attempting to extrapolate the LATE estimated for compliers to the entire study sample.

\section{Conclusion} 
\label{s:conclusion}

Our study shows how to profile compliers and non-compliers in the context of an IV application that uses surgeon preferences for operative care to instrument for emergency surgery. We extend the profiling method introduced by Marbach and Hangartner by showing that their estimator is consistent and normally distributed, deriving analytic standard errors, and discussing the assumptions necessary for applying the method to contexts where the IV is not randomly assigned but ignorable conditional on covariates. After using near-far matching to adjust for observed confounders, we find relevant differences in sociodemographic and clinical characteristics between compliers, always-takers, and never-takers. Since at least one of these characteristics, sepsis, has been shown to influence the size of the effect of surgery on outcomes, our profiling results should caution investigators to generalize the surgery effect estimated for compliers to the rest of the study sample. In line with recent guidelines that recommend complier profiling to be part of any IV analysis\cite{angrist2008mostly,swanson2013commentary,baiocchi2014instrumental}, we hope the methods presented here, as well as the accompanying software packages ``ivdesc'' for \sf{R} and \sf{STATA} that implements them, to be useful tools for applied researchers.

%\section*{Acknowledgments}
%
%We would like to thank...

\clearpage
\singlespacing

\bibliographystyle{ama}
\bibliography{ref}

\newpage

\begin{appendix}

\begin{center}
{\LARGE Appendix\\[2ex]
\Large ``Profiling Compliers in Instrumental Variables Designs''
}\\
\normalsize 
\vspace{0.5cm}
\end{center}
\vspace{2cm}

\section{Proofs} 
\label{delta}

We first derive the asymptotic distribution of the complier mean estimator with the delta method. As a corollary, we obtain a consistent estimator for the estimator's asymptotic variance. 

% Edits MM: 
% - random treatment $Z$ -> random instrument $Z$ 
% - SE as corollary with notation SE(\hat{\mu}_{co}) =  
% - Covariance matrix Cov(Y,Y) redfined to Cov(C,C) since Y is the outcome in the main text

% ToDo: Shorten the first paragraph. Consider dropping D(Z,S) notation. Seems redundant.

\begin{theorem} \label{a.dis}
Consider a random vector $(Z,S,X(S),D(Z,S))$, with random instrument\\ $Z \sim \mathrm{Bernoulli}(\pi_z), \pi_z \in (0,1)$, strata $S \in \{nt,at,co\}$ from a categorical distribution with event probabilities $\pi_{nt},\pi_{at}, \pi_{co}=1-\pi_{nt}-\pi_{at}\neq 0$, covariate $X$ from a mixed distribution with three strata dependent on $S$ with means $\mu_{at},\mu_{nt}$ and $\mu_{co}$ as well as finite variances $\sigma_{nt},\sigma_{at}$ and $\sigma_{co}$, respectively, and treatment 
\begin{equation*}
	D(Z,S) = 
	\begin{cases}
		0 &\text{if $S=nt$}\\
		1 &\text{if $S=at$}\\
		Z &\text{if $S=co$}.
	\end{cases}
\end{equation*}
Suppose that $Z$ is independent of $(X,S)$ and
%Let 
%
%$\boldsymbol{\hat{\beta}}$ can be rewritten as $\hat{\boldsymbol{\beta}}=\frac{1}{N} \sum_{i=1}^N \mathbf{C}_i$
%\begin{equation*}
%\mathbf{C}_i =
%\begin{pmatrix}
%X\\
%Z(1-D)X\\
%(1-Z)DX\\
%Z(1-D)\\
%(1-Z)D\\
%Z\\
%\end{pmatrix}.
%\end{equation*}
let $\boldsymbol{\hat{\beta}}= (\hat{\mu},\hat{\mu}_{vnt}, \hat{\mu}_{vat}, \hat{\pi}_{vnt},\hat{\pi}_{vat},\hat{\pi}_{z})$, with 

\begin{enumerate}
	\item[] i) $\hat{\mu}=\frac{1}{N}\sum_{i=1}^N X_i$, ii) $\hat{\mu}_{vnt}=\frac{1}{N}\sum_{i=1}^N Z_i (1-D_i) X_i$, iii) $\hat{\mu}_{vat}=\frac{1}{N}\sum_{i=1}^N (1-Z_i) D_i X_i$,
	\item[] iv) $\hat{\pi}_{vnt}=\frac{1}{N}\sum_{i=1}^N Z_i (1-D_i)$,
	v) $\hat{\pi}_{vat}=\frac{1}{N}\sum_{i=1}^N (1-Z_i) D_i$ and
	vi) $\hat{\pi}_{z}=\frac{1}{N}\sum_{i=1}^N Z_i$
\end{enumerate}
based on N i.i.d. realizations from $(Z,X,D)$. Consider the estimator

\begin{align*}
\hat{\mu}_{co}
&=f(\boldsymbol{\hat{\beta}})=f(\hat{\mu},\hat{\mu}_{vnt}, \hat{\mu}_{vat}, \hat{\pi}_{vnt},\hat{\pi}_{vat},\hat{\pi}_{z}) \\
&= \left(\hat{\mu} - \frac{\hat{\mu}_{vnt}}{\hat{\pi}_z}- \frac{\hat{\mu}_{vat}}{1-\hat{\pi}_{z}}\right) /\left(1 - \frac{\hat{\pi}_{vnt}}{\hat{\pi}_z} - \frac{\hat{\pi}_{vat}}{1-\hat{\pi}_z}\right).
\end{align*}

Then 

\begin{equation*}
\sqrt{N}(f(\boldsymbol{\hat{\beta}}) - \mu_{co}) \xrightarrow{d} \mathcal{N}(0,\nabla f(\boldsymbol{\beta})^T \Sigma \nabla f(\boldsymbol{\beta})),
\end{equation*}

%	\begin{equation*}
%\boldsymbol{\beta} =
%\begin{pmatrix}
%\mu\\
%\mu_{vnt}\\
%\mu_{vat}\\
%\pi_{vnt}\\
%\pi_{vat}\\
%\pi_{z}\\
%\end{pmatrix}=
%\begin{pmatrix}
%\mu\\
%\pi_{z} \pi_{nt} \mu_{nt}\\
%(1-\pi_z) \pi_{at} \mu_{at}\\
%\pi_{z}\pi_{nt}\\
%(1-\pi_{z})\pi_{at}\\
%\pi_{z}\\
%\end{pmatrix},
%\end{equation*}
%\begin{equation*}
%\nabla f(\boldsymbol{\beta})=
%\renewcommand*{\arraystretch}{1.4}\small
%\begin{pmatrix} 
%1/\pi_{co}\\
%-1(\pi_{co} \pi_z)\\
%-1/(\pi_{co}(1-\pi_z))\\
%\left( \mu - \frac{\mu_{vnt}}{\pi_z}- \frac{\mu_{vat}}{(1-\pi_z)} \right) / (\pi_z \pi_{co}^2) \\ 
%\left( \mu - \frac{\mu_{vnt}}{\pi_z}- \frac{\mu_{vat}}{(1-\pi_z)} \right)  / ((1-\pi_z) \pi_{co}^2) \\ 
%\left(
%(\frac{\pi_{vat}}{(1-\pi_z)^2} - \frac{\pi_{vnt}}{\pi^2_z})\mu + 
%(1- \frac{\pi_{vat}}{(1-\pi_z)^2})\frac{\mu_{vnt}}{\pi_z^2} + 
%(\frac{\pi_{vnt}}{\pi_z^2}-1)\frac{\mu_{vat}}{(1-\pi_z)^2}
%\right) / \pi_{co}^2
%\end{pmatrix},
%\end{equation*}
%with $\pi_{vat}=(1-\pi_z)\pi_{at},\pi_{vnt}=\pi_z\pi_{nt},\mu_{vat}=(1-\pi_z)\pi_{at}\mu_{at},\mu_{vnt}=\pi_z\pi_{nt}\mu_{nt}$ 
where $\boldsymbol{\beta}=\mathbb{E}[\mathbf{C}]$ and $\Sigma=\mathrm{Cov}(\mathbf{C},\mathbf{C})$, with 

\begin{equation*}
\mathbf{C} =
\begin{pmatrix}
X\\
Z(1-D)X\\
(1-Z)DX\\
Z(1-D)\\
(1-Z)D\\
Z\\
\end{pmatrix}.
\end{equation*}

\end{theorem}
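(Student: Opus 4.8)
The plan is to recognize $\hat{\boldsymbol{\beta}}$ as a sample mean of i.i.d.\ random vectors and then chain together the multivariate Central Limit Theorem and the multivariate Delta method, as sketched in Section~\ref{var}. First I would note that each of the six entries of $\hat{\boldsymbol{\beta}}$ is the sample average of a fixed function of $(Z_i,D_i,X_i)$, so that $\hat{\boldsymbol{\beta}} = \frac{1}{N}\sum_{i=1}^N \mathbf{C}_i$, where the $\mathbf{C}_i$ are i.i.d.\ copies of $\mathbf{C}$ and $\boldsymbol{\beta} = \mathbb{E}[\mathbf{C}]$ holds by construction.

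To invoke the CLT I would verify that $\mathbf{C}$ has finite second moments. The three indicator entries $Z(1-D)$, $(1-Z)D$ and $Z$ are bounded by $1$, while the three covariate entries $X$, $Z(1-D)X$ and $(1-Z)DX$ are dominated in absolute value by $|X|$, whose second moment is finite since $X$ has finite strata-wise variances. The multivariate CLT then gives $\sqrt{N}(\hat{\boldsymbol{\beta}} - \boldsymbol{\beta}) \xrightarrow{d} \mathcal{N}(0,\Sigma)$ with $\Sigma = \mathrm{Cov}(\mathbf{C},\mathbf{C})$.

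The substantive step is to establish the two properties of $f$ that the Delta method requires: that $f(\boldsymbol{\beta}) = \mu_{co}$ and that $f$ is continuously differentiable at $\boldsymbol{\beta}$. Both follow from a short computation using $Z \independent (X,S)$ together with the definition of $D(Z,S)$. Since $\{Z=1,D=0\} = \{Z=1,S=nt\}$ and $\{Z=0,D=1\}=\{Z=0,S=at\}$, independence yields $\pi_{vnt} = \pi_z \pi_{nt}$, $\pi_{vat} = (1-\pi_z)\pi_{at}$, $\mu_{vnt} = \pi_z \pi_{nt} \mu_{nt}$ and $\mu_{vat} = (1-\pi_z)\pi_{at}\mu_{at}$. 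Substituting these into $f$, the denominator collapses to $1 - \pi_{nt} - \pi_{at} = \pi_{co}$ and the numerator to $\pi_{co}\mu_{co}$, so $f(\boldsymbol{\beta}) = \mu_{co}$, recovering the population identity of equation~\eqref{eqLeo}. Because $\pi_{co} \neq 0$ by assumption and $\pi_z \in (0,1)$, the denominator of the rational function $f$ is bounded away from zero at $\boldsymbol{\beta}$, so $f$ is $C^1$ in a neighborhood of $\boldsymbol{\beta}$.

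Finally I would apply the multivariate Delta method to the CLT limit, obtaining $\sqrt{N}(f(\hat{\boldsymbol{\beta}}) - \mu_{co}) \xrightarrow{d} \mathcal{N}(0,\nabla f(\boldsymbol{\beta})^T \Sigma \nabla f(\boldsymbol{\beta}))$, which is the stated conclusion. I expect the main obstacle to be bookkeeping rather than anything conceptual: one must check the population centering $f(\boldsymbol{\beta}) = \mu_{co}$ carefully so that the Delta expansion is centered at the right value, and---if an explicit standard error as in~\eqref{coro} is wanted---compute the six partial derivatives of $f$ via the quotient rule, which is where denominator and sign errors are easiest to make.
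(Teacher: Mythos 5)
Your proposal is correct and follows essentially the same route as the paper: rewrite $\hat{\boldsymbol{\beta}}$ as the sample mean of the i.i.d.\ vectors $\mathbf{C}_i$, invoke the multivariate CLT, verify $\pi_{vnt}=\pi_z\pi_{nt}$ and $\pi_{vat}=(1-\pi_z)\pi_{at}$ from independence and the construction of $D$, and apply the Delta method using $\pi_z\in(0,1)$ and $\pi_{co}\neq 0$ to guarantee differentiability at $\boldsymbol{\beta}$. The only cosmetic difference is that you verify $f(\boldsymbol{\beta})=\mu_{co}$ by direct substitution of $\mu_{vnt}=\pi_z\pi_{nt}\mu_{nt}$ and $\mu_{vat}=(1-\pi_z)\pi_{at}\mu_{at}$, whereas the paper reaches the same conclusion by appealing to its earlier identification result in equation~\eqref{eqLeo}; both are fine, and yours is arguably more self-contained.
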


\begin{proof}
	
		We first show that $\sqrt{N}(\boldsymbol{\hat{\beta}} - \boldsymbol{\beta}) \xrightarrow{d} \mathcal{N}(0,\Sigma)$.
	The estimator $\boldsymbol{\hat{\beta}}$ can be rewritten as $\boldsymbol{\hat{\beta}}=\frac{1}{N} \sum_{i=1}^N \mathbf{C}_i$, with 
	
	\begin{equation*}
	\mathbf{C}_i =
	\begin{pmatrix}
	X_i\\
	Z_i(1-D_i)X_i\\
	(1-Z_i)D_iX_i\\
	Z_i(1-D_i)\\
	(1-Z_i)D_i\\
	Z_i\\
	\end{pmatrix}.
	\end{equation*}
	Each individual strata $X(S),S\in\{nt,at,co\}$ has a finite mean and variance and as a result so does $X$. Hence, the $\mathbf{C}_i$ are i.i.d. with finite mean and variance. We can thus invoke the central limit theorem and conclude that $\sqrt{N}(\boldsymbol{\hat{\beta}} - \boldsymbol{\beta}) \xrightarrow{d} \mathcal{N}(0,\Sigma)$, where $\Sigma=\mathrm{Cov}(\mathbf{C},\mathbf{C})$. 
	
	By construction of $D$ and the assumption of random treatment assignment			
	\begin{align*}
	\pi_{vnt}=\mathbb{E}[Z(1-D)] &= \mathbb{E}[\mathbbm{1}_{\{Z=1,D=0\}}] \\ &= \mathbb{E}[\mathbbm{1}_{\{Z=1,S=nt\}}] \\
	&=\mathbb{E}[\mathbbm{1}_{\{Z=1\}} \mathbbm{1}_{\{S=nt\}}] \\
	&= \mathbb{E}[\mathbbm{1}_{\{Z=1\}}] \mathbb{E}[\mathbbm{1}_{\{S=nt\}}] \\
	&= \pi_z \pi_{nt}.
	\end{align*} 
	Similarly,
	$\pi_{vat}=\mathbb{E}[(1-Z)D] = (1-\pi_{z})\pi_{at}$. Therefore, $1-\frac{\pi_{vnt}}{\pi_{z}}-\frac{\pi_{vat}}{1-\pi_z}=\pi_{co}$. 
%	As a result the assumptions $\pi_z \in (0,1)$, $\pi_{co}\neq 0$ suffice to conclude that 
	By the assumptions that $\pi_z \in (0,1)$, $\pi_{co}\neq 0$ and equation \eqref{eqLeo} we can thus conclude that 
%Suppose that $\sqrt{N}(\boldsymbol{\hat{\beta}} - \boldsymbol{\beta}) \xrightarrow{d} \mathcal{N}(0,\Sigma),$ wit	
%	\begin{equation*}
%	\boldsymbol{\beta} =
%	\begin{pmatrix}
%	\mu\\
%	\mu_{vnt}\\
%	\mu_{vat}\\
%	\pi_{vnt}\\
%	\pi_{vat}\\
%	\pi_{z}\\
%	\end{pmatrix}=
%	\begin{pmatrix}
%	\mu\\
%	\pi_{z} \pi_{nt} \mu_{nt}\\
%	(1-\pi_z) \pi_{vat} \mu_{at}\\
%	\pi_{z}\pi_{nt}\\
%	(1-\pi_{z})\pi_{at}\\
%	\pi_{z}\\
%	\end{pmatrix}.
%	\end{equation*}
%	Then
	\begin{equation*}
		f(\boldsymbol{\beta})=\frac{\mu- \frac{\mu_{vnt}}{\pi_{z}} - \frac{\mu_{vat}}{(1-\pi_z)}}{1-\frac{\pi_{vnt}}{\pi_{z}}-\frac{\pi_{vat}}{1-\pi_z}}
%		=\frac{\mu- \pi_{nt} \mu_{nt} - \pi_{at} \mu_{at}}{1-\pi_{nt}-\pi_{at}} 
		= \mu_{co},
	\end{equation*}
	where $\mu_{vnt}=\mathbb{E}[Z(1-D)X]$ and $\mu_{vat}=\mathbb{E}[(1-Z)DX]$.
%	By the assumption that $\pi_z \in (0,1)$, $\pi_{co}\neq 0$ as well as general mixed distribution mean properties.
	By the same assumptions, 
%	$\nabla f(\boldsymbol{\beta})$
	\begin{equation*}
	\nabla f(\boldsymbol{\beta})=
	\renewcommand*{\arraystretch}{1.4}\small
	\begin{pmatrix} 
	1/\pi_{co}\\
	-1(\pi_{co} \pi_z)\\
	-1/(\pi_{co}(1-\pi_z))\\
	\left( \mu - \frac{\mu_{vnt}}{\pi_z}- \frac{\mu_{vat}}{(1-\pi_z)} \right) / (\pi_z \pi_{co}^2) \\ 
	\left( \mu - \frac{\mu_{vnt}}{\pi_z}- \frac{\mu_{vat}}{(1-\pi_z)} \right)  / ((1-\pi_z) \pi_{co}^2) \\ 
	\left(
	(\frac{\pi_{vat}}{(1-\pi_z)^2} - \frac{\pi_{vnt}}{\pi^2_z})\mu + 
	(1- \frac{\pi_{vat}}{(1-\pi_z)^2})\frac{\mu_{vnt}}{\pi_z^2} + 
	(\frac{\pi_{vnt}}{\pi_z^2}-1)\frac{\mu_{vat}}{(1-\pi_z)^2}
	\right) / \pi_{co}^2
	\end{pmatrix} 
	\end{equation*}
	 exists and is unequal to $\boldsymbol{0}$ and we can conclude with the delta method that
	\[
	\sqrt{N}(f(\boldsymbol{\hat{\beta}}) - \mu_{co}) \xrightarrow{d} \mathcal{N}(0,\nabla f(\boldsymbol{\beta})^T \Sigma \nabla f(\boldsymbol{\beta})).
	\]

%	\begin{enumerate}[label=\roman*)]
%		%			\item The law of iterated expectation, i.e. $E[X] = E[E[X|I]]$. \label{LoIE}
%		\item By construction of $D$ and the assumption of random treatment assignment			
%		\begin{align*}
%		\pi_{vnt}=\mathbb{E}[Z(1-D)] &= \mathbb{E}[\mathbbm{1}_{\{Z=1,D=0\}}] \\ &= \mathbb{E}[\mathbbm{1}_{\{Z=1,S=nt\}}] \\
%		&=\mathbb{E}[\mathbbm{1}_{\{Z=1\}} \mathbbm{1}_{\{S=nt\}}] \\
%		&= \mathbb{E}[\mathbbm{1}_{\{Z=1\}}] \mathbb{E}[\mathbbm{1}_{\{S=nt\}}] \\
%		&= \pi_z \pi_{nt}.
%		\end{align*} 
%		Similarly,
%		$\mathbb{E}[(1-Z)D] = (1-\pi_{z})\pi_{at} = \pi_{vat}$. \label{product binomial}
%		\item Using \ref{product binomial} it follows that 
%		\begin{align*}
%		\mathbb{E}[Z(1-D)X] 
%		&= \mathbb{E}[\mathbbm{1}_{\{Z=1,D=0\}}X]   \\
%		&= \mathbb{P}[Z(1-D)=1] \mathbb{E}[X|Z=1,D=0]   \\
%		&= \pi_{vnt} \mathbb{E}[X|Z=1,S=nt] \\
%		&= \pi_{vnt} \mathbb{E}[X|S=nt] \\
%		&= \pi_{vnt} \mu_{nt} = \mu_{vnt}
%		\end{align*}
%		and similarly 
%		\[\mathbb{E}[(1-Z)DX] =\pi_{vat} \mu_{at}=\mu_{vat}\]
%	\end{enumerate}

\end{proof}

\begin{corollary}\label{corollary: avar estimation}
If in addition to the assumptions in Theorem \ref{a.dis}, $X$ has finite fourth moments, then the plug in estimator is consistent for $\hat{\mu}_{co}$'s asymptotic variance. That is,
%Using equation \eqref{a.dis.equ} from Theorem \ref{a.dis} we estimate the standard error of the complier mean as
\[
\nabla f(\boldsymbol{\hat{\beta}})^T \hat{\Sigma} \nabla f(\boldsymbol{\hat{\beta}}) \xrightarrow{P} \nabla f(\boldsymbol{\beta})^T \Sigma \nabla f(\boldsymbol{\beta}),
\]  
with $\hat{\Sigma}$ the sample covariance matrix of the $\mathbf{C}_i$.
%, is consistent for $\hat{\mu}_{co}$'s asymptotic variance.
\end{corollary}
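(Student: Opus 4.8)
The plan is to reduce the claim to three facts and then glue them together with the continuous mapping theorem. The three facts are: (i) $\hat{\boldsymbol{\beta}} \xrightarrow{P} \boldsymbol{\beta}$; (ii) $\hat{\Sigma} \xrightarrow{P} \Sigma$ entrywise; and (iii) the maps $\boldsymbol{\beta} \mapsto \nabla f(\boldsymbol{\beta})$ and $(\mathbf{v},M) \mapsto \mathbf{v}^T M \mathbf{v}$ are continuous at the relevant limit points. Granting these, continuity of $\nabla f$ at $\boldsymbol{\beta}$ gives $\nabla f(\hat{\boldsymbol{\beta}}) \xrightarrow{P} \nabla f(\boldsymbol{\beta})$ by the continuous mapping theorem, and then applying continuity of the quadratic form to the jointly convergent pair $(\nabla f(\hat{\boldsymbol{\beta}}),\hat{\Sigma})$ yields $\nabla f(\hat{\boldsymbol{\beta}})^T \hat{\Sigma}\, \nabla f(\hat{\boldsymbol{\beta}}) \xrightarrow{P} \nabla f(\boldsymbol{\beta})^T \Sigma\, \nabla f(\boldsymbol{\beta})$, which is the assertion.

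Fact (i) is immediate: $\hat{\boldsymbol{\beta}} = \frac{1}{N}\sum_i \mathbf{C}_i$ is an average of i.i.d.\ vectors with finite mean (already used in the proof of Theorem~\ref{a.dis}), so the weak law of large numbers gives $\hat{\boldsymbol{\beta}} \xrightarrow{P} \boldsymbol{\beta}$; alternatively this follows from the convergence in distribution established there. For fact (iii), the explicit formula for $\nabla f(\boldsymbol{\beta})$ in the proof of Theorem~\ref{a.dis} exhibits it as a rational function of the coordinates of $\boldsymbol{\beta}$ whose denominators are powers of $\pi_z$, $1-\pi_z$, and $\pi_{co}$. Under the standing assumptions $\pi_z \in (0,1)$ and $\pi_{co} \neq 0$, these denominators are nonzero at $\boldsymbol{\beta}$, so $f$ and $\nabla f$ are defined and continuous on an open neighborhood of $\boldsymbol{\beta}$; consequently $\nabla f(\hat{\boldsymbol{\beta}})$ is well defined with probability tending to one, and the continuous mapping theorem gives $\nabla f(\hat{\boldsymbol{\beta}}) \xrightarrow{P} \nabla f(\boldsymbol{\beta})$.

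The substance of the argument, and the only place the extra fourth-moment hypothesis enters, is fact (ii). Writing the sample covariance as $\hat{\Sigma} = \frac{1}{N}\sum_i \mathbf{C}_i \mathbf{C}_i^T - \bar{\mathbf{C}}\,\bar{\mathbf{C}}^T$, the second term converges to $\boldsymbol{\beta}\boldsymbol{\beta}^T$ by fact (i) and continuity of multiplication, so it suffices to show the sample second-moment matrix $\frac{1}{N}\sum_i \mathbf{C}_i \mathbf{C}_i^T$ converges in probability to $\mathbb{E}[\mathbf{C}\mathbf{C}^T]$ entrywise. Because $Z$ and $D$ take values in $\{0,1\}$, every entry of $\mathbf{C}\mathbf{C}^T$ is an indicator times a power of $X$ of degree at most two; the only entries carrying a genuine $X^2$ factor are those arising from products of two of the three $X$-weighted coordinates $X$, $Z(1-D)X$, and $(1-Z)DX$ (for instance the $(1,1)$ entry $X^2$ and the diagonal entries $Z(1-D)X^2$ and $(1-Z)DX^2$), while the remaining entries involve $X$ to degree at most one or are bounded. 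I would establish entrywise convergence by Chebyshev's inequality, for which each entry of $\mathbf{C}\mathbf{C}^T$ must have finite variance; the binding requirement is the $X^2$ entries, whose variance is controlled by $\mathrm{Var}(X^2) \le \mathbb{E}[X^4]$. This is exactly why finite fourth moments of $X$ are assumed, whereas the finite second moments of Theorem~\ref{a.dis} no longer suffice.

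The main obstacle is therefore the bookkeeping in fact (ii): verifying that each distinct entry of the symmetric matrix $\mathbf{C}\mathbf{C}^T$ has finite second moment under the fourth-moment hypothesis, using $Z^2 = Z$, $(1-D)^2 = 1-D$, and $Z(1-Z) = 0$ to collapse products, and then assembling the entrywise convergences into convergence of the matrix. Everything after that is a routine invocation of the continuous mapping theorem for the composition $(\hat{\boldsymbol{\beta}},\hat{\Sigma}) \mapsto \nabla f(\hat{\boldsymbol{\beta}})^T \hat{\Sigma}\,\nabla f(\hat{\boldsymbol{\beta}})$.
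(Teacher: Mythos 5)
Your argument is correct. Note that the paper itself states Corollary \ref{corollary: avar estimation} \emph{without} proof---the appendix only proves Theorem \ref{a.dis} and then asserts the corollary---so there is no official argument to compare against; yours is the standard one and almost certainly what the authors intended. The three-step structure is sound: marginal convergence in probability of $\hat{\boldsymbol{\beta}}$ and $\hat{\Sigma}$ does imply joint convergence in probability, so the continuous mapping theorem applies to the composite map $(\hat{\boldsymbol{\beta}},\hat{\Sigma})\mapsto \nabla f(\hat{\boldsymbol{\beta}})^T\hat{\Sigma}\,\nabla f(\hat{\boldsymbol{\beta}})$, and your observation that $\nabla f$ is a rational function with denominators that are powers of $\pi_z$, $1-\pi_z$, $\pi_{co}$, all nonzero under the theorem's hypotheses, correctly handles well-definedness with probability tending to one. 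Your bookkeeping for $\mathbf{C}\mathbf{C}^T$ is also right: since $Z,D\in\{0,1\}$, every entry is an indicator times $X^j$ with $j\le 2$, the cross term $Z(1-D)X\cdot(1-Z)DX$ vanishes identically, and the only entries needing $\mathbb{E}[X^4]<\infty$ for a Chebyshev argument are the three carrying $X^2$. One small remark: the fourth-moment hypothesis is sufficient but not necessary for fact (ii)---Khinchine's weak law applied to the i.i.d.\ entries of $\mathbf{C}_i\mathbf{C}_i^T$ only requires that these entries have finite mean, i.e.\ $\mathbb{E}[X^2]<\infty$, which already holds under Theorem \ref{a.dis}. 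Your Chebyshev route is the natural reading of why the authors impose the extra condition, but you could note that the corollary holds under the weaker assumption as well.
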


Based on Corollary \ref{corollary: avar estimation}, we estimate asymptotically valid standard errors for our complier mean estimator $\hat{\mu}_{co}$, as 
\[
\sqrt{\frac{1}{n}\nabla f(\boldsymbol{\hat{\beta}})^T \hat{\Sigma} \nabla f(\boldsymbol{\hat{\beta}})},
\]
with $n$ the sample size.

\newpage

\section{Additional Results}
\label{results}

\begin{table}[htp]
\caption{Means and standard errors for profiling covariates}\label{tab:tab1}
\centering
\begin{tabular}{lcccc}
\toprule
Variable & Always-taker & Complier & Never-taker & Sample \\
\midrule
Proportion 0,1 &  0.21 (0.01) &  0.21 (0.01) &  0.57 (0.01) &  1.00 (0.00) \\ 
  Age (years) & 68.71 (0.57) & 64.44 (0.66) & 69.35 (0.35) & 68.17 (0.20) \\ 
  No. Comorbidities &  4.21 (0.09) &  3.31 (0.09) &  4.13 (0.05) &  3.97 (0.03) \\ 
  Sepsis 0/1 &  0.49 (0.02) &  0.23 (0.02) &  0.40 (0.01) &  0.38 (0.01) \\ 
  Female &  0.54 (0.02) &  0.61 (0.02) &  0.55 (0.01) &  0.56 (0.01) \\ 
  Hispanic &  0.06 (0.01) &  0.09 (0.01) &  0.09 (0.01) &  0.09 (0.00) \\ 
  White &  0.77 (0.02) &  0.74 (0.02) &  0.77 (0.01) &  0.76 (0.01) \\ 
  African-American &  0.12 (0.01) &  0.16 (0.01) &  0.12 (0.01) &  0.13 (0.00) \\ 
  Disability 0/1 &  0.10 (0.01) &  0.08 (0.01) &  0.10 (0.01) &  0.10 (0.00) \\ 
  Medicare &  0.66 (0.02) &  0.55 (0.02) &  0.67 (0.01) &  0.64 (0.01) \\ 
  Medicaid &  0.05 (0.01) &  0.07 (0.01) &  0.06 (0.01) &  0.06 (0.00) \\ 
  Liver Disease 0/1 &  0.30 (0.02) &  0.23 (0.02) &  0.24 (0.01) &  0.25 (0.01) \\ 
  Renal Failure 0/1 &  0.06 (0.01) &  0.05 (0.01) &  0.06 (0.01) &  0.06 (0.00) \\ 
  Lymphoma 0/1 &  0.02 (0.01) &  0.01 (0.00) &  0.02 (0.00) &  0.02 (0.00) \\ 
  Metastatic Cancer 0/1 &  0.04 (0.01) &  0.04 (0.01) &  0.04 (0.00) &  0.04 (0.00) \\ 
  Obesity 0/1 &  0.12 (0.01) &  0.11 (0.01) &  0.12 (0.01) &  0.11 (0.00) \\  
\bottomrule
\end{tabular}
\end{table}

\end{appendix}

\end{document}